\theoremstyle{plain}
\newtheorem{theorem}{Theorem}[section]
\newtheorem{thm*}{Theorem}[section]
\newtheorem{corollary}[theorem]{Corollary}
\newtheorem{lemma}[theorem]{Lemma}
\newtheorem{lemma*}{Lemma}
\theoremstyle{definition}
\newtheorem{definition}[theorem]{Definition}
\newtheorem{remark}[theorem]{Remark}
\newtheorem*{remark*}{Remark}
\newtheorem{example}[theorem]{Example}
\newtheorem{question*}{Question}
\DeclareMathOperator{\esssup}{ess\,sup}
\numberwithin{equation}{theorem}
\date\today
\begin{document}

 \title[On the Coherent Risk Measure Representations in the  Discrete Probability Spaces]{On the Coherent Risk Measure Representations in the  Discrete Probability Spaces}

 \author[Kerem U\u{g}urlu]
{Kerem U\u{g}urlu}

\address {Department of Mathematics, University of Southern California,
Los Angeles, CA}
%%% Email address is optional.
\email{kugurlu@usc.edu}

\keywords{Kusuoka representation; coherent risk measures; law invariance; comonotonicity}

\begin{abstract}
We give a complete characterization of both comonotone and not comonotone coherent risk measures in the discrete finite probability space,  where each outcome is equally likely. To the best of our knowledge, this is the first work that \textit{ characterizes and distinguishes} comonotone and not comonotone coherent risk measures via AVaR representation in the discrete finite probability space of equally likely atoms. The characterization gives a more efficient and exact way of representing the law invariant coherent risk measures in this probability space, which is crucial in applications and simulations.
\end{abstract}

\maketitle

\section{Introduction}
In the seminal paper of Artzner et al. \cite{ADEH}, the coherent risk measures are introduced and their properties are axiomatized. Risk measures have gained a lot of interest both in theory and applications since then. This paper addresses representations of the coherent risk measures in the discrete probability space $(\Omega,2^{|\Omega|},\mathbb{P})$, where $|\Omega| = n$ and $\mathbb{P}(\omega(i))=1/n$ for all $i=1,2,...,n$. The purpose of this paper is to give a complete characterization of both comonotone and not neccessarily comonotone coherent risk measures in this probability space. We appeal to the natural risk statistics formulation of \cite{HKP} and to the concept of functional coherence introduced independently in \cite{NG} and \cite{PS} and represent \textit{any} coherent risk measure in this discrete probability space. The closest works in this direction are \cite{BB} and \cite{NG}. In both of these works, the characterization of \textit{comonotone} coherent risk measures are given via AVaR as building blocks. In \cite{NG}, the not comonotone case is given via a supremum of AVaR sums formulation, whereas in \cite{BB} there is no referring to not comonotone case. To the best of our knowledge, this is the first work that characterizes \textit{and} distinguishes comonotone and not comonotone coherent risk measures via a \textit{simplified} AVaR representation in this probability space, which is crucial in the applications and simulations.

The rest of the paper is as follows. In Section 2, we give the theoretical background along with the necessary definitons. In Section 3, we show first that any coherent risk measure is SSD preserving  in the discrete  finite probabilty space, where each outcome is equally likely. Then, we give our two theorems which give the characterizations of comonotone and not comonotone coherent risk measures in this probability space. In the rest of the paper, we prove these two results.

\section{Preliminaries and Theoretical Background}\label{Review}

Let $(\Omega, \mathcal{F},\mathbb{P})$ be the atomless standard probability space. Hence without loss of generality we can take $\Omega$ to be the unit interval $[0,1]$, $\mathbb{P}$ to be the Lebesgue measure with $\mathcal{F}$ to be the Borel sigma algebra. Let $(\Omega, \mathcal{G},\mathbb{P})$ be the probability space, where $|\Omega| = n $, $\mathcal{G} = 2^{\Omega}$ and $\mathbb{P}$ is a probability measure that satisfies $\mathbb{P}(\omega_i) =1/n$ for all $i \in \{1,2,...,n\}$. We call this probability space the uniform discrete probability space. A random variable (r.v.) $X$ is a measurable function from $\Omega$ to $\mathbb{R}$. The cumulative distribution function of a r.v. is defined by $F_X(x) = \mathbb{P}( X \leq x )$. The $p$-quantile of a r.v. $X$ is denoted by $\mathrm{VaR}_p(X):=\inf \{x:P(X \leq x)\geq p \}$, which is left-continuous and lower semi-continuous.
\begin{definition}
Given two r.v.'s $X$ and $Y$, we say $X$ second-order stochastically dominates (SSD) $Y$ and write $X \succeq Y$, if
\begin{equation}
\int_{-\infty}^{t} F_X(s) ds \leq  \int_{-\infty}^{t} F_Y(s) ds, \qquad \forall t \in \mathbb{R}.
\end{equation}
\end{definition}
\begin{definition}
The coherent risk measure $\rho$ is a function that is mapping $\mathbb{R}$-valued r.v.'s into the real numbers $\mathbb{R}$ or to $+\infty$, which satisfies the following axioms
\begin{itemize}
\item (monotonicity): $\rho(Y_1) \leq \rho(Y_2)$ whenever $Y_1 \leq Y_2$ almost surely.
\item (positive homogeneity): $\rho(\lambda Y ) = \lambda \rho(Y)$ whenever $\lambda > 0$.
\item (convexity) $\rho( ( 1 - \lambda )Y_0 + \lambda Y_1 ) \leq ( 1 - \lambda )\rho(Y_0) + \lambda \rho (Y_1)$ for $0 \leq \lambda \leq 1 $.
\item (translation invariance) $\rho(Y+c) = \rho(Y) + c$ if $c \in \mathbb{R}$. 
\end{itemize}
\end{definition}
\begin{definition}
A coherent risk measure $\rho$ is called law invariant if two r.v.'s $X$ and $Y$ on a probability space having the same distribution implies that $\rho(X) = \rho(Y)$.
\end{definition}
An important coherent risk measure that we will use throughout the paper is the Average-Value-at-Risk denoted by $\mathrm{AVaR}_{\alpha}(Y)$
\begin{equation}
\label{rep1}
\mathrm{AVaR}_{\alpha}(Y) := \frac{1}{1 - \alpha} \int_{\alpha}^1 \mathrm{VaR}_u(Y) du
\end{equation}
An alternative representation to \eqref{rep1} for $\mathrm{AVaR}_{\alpha}(Y)$ is given in \cite{RU} with the following form 
\begin{equation}
\label{rep2}
\mathrm{AVaR}_{\alpha}(Y) = \min_{s \in \mathbb{R}} \left \{ s + \frac{1}{1-\alpha} \mathbb{E}[ (X - s)^{+} ] \right\}
\end{equation}
where the minimum in \eqref{rep2} is attained at $\mathrm{VaR}_{\alpha}(Y)$.
\begin{remark}
\label{cont} We note from \eqref{rep2} that $\alpha \rightarrow \mathrm{AVaR}_{\alpha}(Y)$ is a continuous function with respect to variable $\alpha$ on the interval $[0,1)$. Note also that $\mathrm{AVaR}_0(Y) = \mathbb{E}[Y]$ and $\lim_{\alpha \rightarrow \infty} \mathrm{AvaR}_{\alpha} (Y) = \esssup [Y]$. 
\end{remark}
Moreover, it is shown in \cite{RS} via the Fenchel-Moreau theorem (see e.g.\cite{RW}) that we have the following equivalent representation for $\mathrm{AVaR}_{\alpha}(Y)$, and $Y \in L^p(\Omega, \mathcal{H}, \mathbb{P})$, with $p \geq 1$
\begin{equation}
\label{avar_rep}
\mathrm{AVaR}_{\alpha}(Y) = \sup_{ \mu \in \mathcal{C}} \langle \mu, Y \rangle 
\end{equation}
where $\mathcal{C}$ is the set of probability densities with absolutely continuous probability densities $h \in L^{q}(\Omega, \mathcal{H}, \mathbb{P})$ with respect to underlying reference probability measure $\mathbb{P}$ satisfying
\begin{equation}
\mathcal{C} = \left\{ h \in L^{q} : 0  \leq h \leq \frac{1}{1-\alpha}, \int_{\Omega} h d\mathbb{P} = 1  \right\}.
\end{equation}
Here $L^{q}(\Omega, \mathcal{H}, \mathbb{P})$ is the dual of $L^{p}(\Omega, \mathcal{H}, \mathbb{P})$. It is also the case, that supremum in \eqref{avar_rep} is attained, whenever $1 \leq p < \infty $.
\begin{remark}
Note that, in the discrete uniform case with $P(\omega(i) = \frac{1}{n})$, we immediately get that the absolutely continuous probability density functions $h(\omega)$ are of the form 
\begin{equation}
h(\omega_i) \leq \min \{\frac{1}{n-i},1 \} ,\mbox{ for all } 1 \leq i \leq n.
\end{equation}
\end{remark}
\begin{remark}
We also remark that, in fact, more general is true. Due to Fenchel-Moreau theorem, \textit{any} law invariant coherent risk measure in $L^p(\Omega, \mathcal{H}, \mathbb{P}), 1 \leq p \leq \infty $ has the representation
\begin{equation}
\label{coh_rep}
\rho(X) = \sup_{ \nu \in \mathcal{D}} \langle \mu, X \rangle 
\end{equation}
where $\mathcal{D}$ is a convex set of absolutely continuous probability densities of $\nu$ with respect to reference probability  measure $\mathbb{P}$ in the dual of $L^p$ (see \cite{ADEH}). 
\end{remark}
We will need the following dependence property of random variables and coherent risk measures, correspondingly.
\begin{definition}
A pair of r.v.'s $X$ and $Y$ is said to be comonotone, if the following condition holds.
\begin{equation}
(X(\omega_1) - X(\omega_2))(Y(\omega_1) - Y(\omega_2)) \geq 0 \text{ a.s. }
\end{equation}
Similarly, a coherent risk measure $\rho$ is said to be comonotone additive, if for every pair of comonotone r.v.'s $X$ and $Y$
\begin{equation}
\rho(X+Y) = \rho(X) + \rho(Y)
\end{equation}
holds.
\end{definition}
\begin{remark}
We know by \cite{P} that $\mathrm{AVaR}_{\alpha}(X)$ is comonotone additive whenever $ \alpha < 1$ and, by considering the continuity of $\mathrm{AVaR}_{\alpha}(X)$, whenever $\mathrm{AVaR}_{\alpha}(X) \neq \esssup[X]$. However, below we provide a simple example that the coherent risk measure $\esssup[X]$ is not comonotone additive.
\end{remark}
\begin{example}
Let $\Omega$ be the discrete uniform probability space of four atoms $\omega_1, \omega_2, \omega_3, \omega_4$. Let $X(\omega_1) = Y(\omega_1) = 0$, $X(\omega_2) = Y(\omega_2) = 1$, $X(\omega_3) = 0.8$, $Y(\omega_3) = -1$ and $X(\omega_4) = 3$ and $Y(\omega_4) = 0.5$.
Note that $X$ and $Y$ are comonotone but $\esssup[X+Y] < \esssup[X] + \esssup[Y]$.
\end{example}
In his seminal work Kusuoka \cite{K} showed the following characterization of law invariant coherent risk measure on the atomless probability space $(\Omega, L^{\infty}, \mathbb{P})$, which later extended to the atomless $L^p, p \geq 1$ case (see \cite{DRS} and \cite{JST}). 
\begin{theorem} \cite{K}
A mapping $\rho : L^{p} \rightarrow \mathbb{R} \cup \{ \infty \}, p \geq 1$  on an atomless probability space $(\Omega, \mathcal{F},\mathbb{P})$ is a law invariant coherent risk measure if and only if it admits the following representation 
\begin{equation}
\label{kusuoka}
\rho (X) = \sup_{\mu \in \mathcal{M}} \int_{[0,1]} \mathrm{AVaR}_{t}(X)d\mu_t
\end{equation}
for any r.v. $X$, where $\mathcal{M}$ is a probability measure on $[0,1]$. 
If, in addition $\rho$ is comonotone additive, then supremum is attained in \eqref{kusuoka} for a probability measure $\mu^*$ on $[0,1]$ such that 
\begin{equation}
\label{kusuoka_com}
\rho (X) = \int_{[0,1]} \mathrm{AVaR}_{t}(X)d\mu^*_t.
\end{equation}
\end{theorem}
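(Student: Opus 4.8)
The plan is to prove the two implications separately; the ``if'' direction is routine, and essentially all of the work is in the ``only if'' direction together with the comonotone refinement. For sufficiency, one checks that the right-hand side of \eqref{kusuoka} is a law invariant coherent risk measure: each $\mathrm{AVaR}_t$ is one, the four defining axioms are stable under integrating $t \mapsto \mathrm{AVaR}_t(X)$ against a probability measure on $[0,1]$ and under taking a supremum over any family of such functionals, and law invariance passes through because $t \mapsto \mathrm{AVaR}_t(X)$ depends on $X$ only through its law. No structural input is needed here.

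For necessity I would start from the Fenchel--Moreau dual representation \eqref{coh_rep}, $\rho(X) = \sup_{h \in \mathcal{D}} \langle h, X \rangle$, taking $\mathcal{D}$ to be the maximal representing set of densities, which is convex, closed (weak-$*$ closed when $p = \infty$), and --- because $\rho$ is law invariant --- invariant under all measure-preserving transformations of $(\Omega, \mathcal{F}, \mathbb{P})$; this is exactly where atomlessness enters, as it furnishes enough such transformations to realize every equimeasurable rearrangement. Next I would apply the Hardy--Littlewood rearrangement inequality: for $h \in \mathcal{D}$ and any $X$, $\langle h, X \rangle \le \int_0^1 h^{*}(u)\, X^{*}(u)\, du$, where $h^{*}$ is the nondecreasing rearrangement of $h$ and $X^{*}(u) = \mathrm{VaR}_u(X)$, with equality when $h$ and $X$ are comonotone. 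Since $h^{*}$ is equidistributed with $h$ it again lies in $\mathcal{D}$, so the supremum in \eqref{coh_rep} may be taken over the nondecreasing densities $g$ in $\mathcal{D}$ evaluated against the quantile function of $X$. Finally I would recognise such an evaluation as a mixture of $\mathrm{AVaR}$'s: the extreme rays of the cone of nondecreasing densities on $[0,1]$ are the step functions $g_\alpha = \tfrac{1}{1-\alpha}\mathbf{1}_{[\alpha,1]}$, $0 \le \alpha < 1$, for which $\int_0^1 g_\alpha(u)\, X^{*}(u)\, du = \tfrac{1}{1-\alpha}\int_\alpha^1 \mathrm{VaR}_u(X)\, du = \mathrm{AVaR}_\alpha(X)$ by \eqref{rep1}, while an arbitrary nondecreasing density decomposes as $g = \int_{[0,1]} g_\alpha\, d\mu(\alpha)$ for a probability measure $\mu$ on $[0,1]$ (concretely, $\mu$ carries the increments of $g$ reweighted by $1 - \alpha$, with the boundary value $g(0)$ placed at $\alpha = 0$ and the $\alpha = 1$ endpoint absorbed by the continuity recorded in Remark~\ref{cont}). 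By Fubini, $\int_0^1 g(u)\, X^{*}(u)\, du = \int_{[0,1]} \mathrm{AVaR}_t(X)\, d\mu_t$, and taking the supremum over $g$ yields \eqref{kusuoka} with $\mathcal{M}$ the set of measures so obtained.

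If, in addition, $\rho$ is comonotone additive, then restricted to the convex cone of comonotone random variables $\rho$ is additive and positively homogeneous, so the functional $f \mapsto \rho(f(U))$ on quantile functions $f$ (with $U$ uniform on $(\Omega,\mathcal{F},\mathbb{P})$) is additive, positively homogeneous, translation invariant and monotone, hence extends to a single positive linear functional represented by one nondecreasing density $g^{*}$ --- equivalently, $\rho$ is a Choquet integral with respect to a concave distortion of $\mathbb{P}$, by the law invariant form of Schmeidler's theorem. Feeding this single $g^{*}$ into the decomposition of the previous paragraph produces the single probability measure $\mu^{*}$ with $\rho(X) = \int_{[0,1]} \mathrm{AVaR}_t(X)\, d\mu^{*}_t$, which is \eqref{kusuoka_com}.

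The step I expect to be the main obstacle is the reduction in the necessity part from the abstract dual set $\mathcal{D}$ to a supremum over nondecreasing densities against the quantile function of $X$: it requires the law-invariance of a canonical choice of $\mathcal{D}$, the rearrangement inequality, and some care about which topology makes $\mathcal{D}$ closed and about when the suprema are attained (attained for $1 \le p < \infty$, only approached in general for $p = \infty$), and it is precisely here that atomlessness is indispensable. The Choquet-type decomposition of monotone densities into the $g_\alpha$'s, and the resulting identification of the integral with an $\mathrm{AVaR}$-mixture, is the remaining technical ingredient but is classical, amounting to a change of variables and an integration by parts.
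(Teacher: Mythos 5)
The paper does not prove this statement at all: it is quoted verbatim from Kusuoka \cite{K} (with the $L^p$ extensions credited to \cite{DRS} and \cite{JST}), so there is no in-paper argument to compare against. Your sketch is the standard proof of Kusuoka's theorem and is correct in outline: the easy direction by stability of the coherence axioms and law invariance under mixtures and suprema; the hard direction via the Fenchel--Moreau dual set, its rearrangement invariance (which is exactly where atomlessness is used), the Hardy--Littlewood inequality to reduce to nondecreasing densities paired against quantile functions, and the decomposition of a nondecreasing density $g$ as $g(u)=\int_{[0,1)}\tfrac{1}{1-\alpha}\mathbf{1}_{[\alpha,1]}(u)\,d\mu(\alpha)$ with $d\mu(\alpha)=(1-\alpha)\,dg(\alpha)+g(0)\delta_0$, which is precisely the reweighting you describe and does integrate to a probability measure because $\int_0^1 g=1$. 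The comonotone refinement via Schmeidler's representation (comonotone additive $+$ coherent $=$ Choquet integral with respect to a concave distortion, i.e.\ a single nondecreasing density $g^*$) is also the classical route. The one place where your sketch glosses over a real technicality is the passage from ``$\mathcal{D}$ is invariant under measure-preserving transformations'' to ``the sup may be taken over nondecreasing rearrangements paired with $X^*$'': when $X$ has atoms in its distribution one cannot always realize exact comonotone alignment by a single measure-preserving bijection, and one must either approximate or work with the supremum over transformations; you flag this yourself, and it is handled in the references, so I regard it as an acknowledged rather than a fatal gap.
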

We proceed with the following definition.
\begin{definition}
A coherent risk measure $\rho$ is said to preserve SSD if $X \succeq Y$ implies that $\rho(X) \geq \rho(Y)$
\end{definition}
The following result shows the strong dependence of SSD-preservation and coherent risk measures, (see, \cite{FS}, Theorem 2.58 and Remark 4.38.)
\begin{theorem}
For $X,Y \in L^{\infty}$ the following conditions are equivalent:
\begin{itemize}
\item $X \preceq Y$,
\item $\mathbb{E}[U(X)] \leq \mathbb{E}[U(Y)]$ for all nondecreasing concave functions $U$ on $\mathbb{R}$,
\item $\mathrm{AVaR}_{\alpha}(Y) \leq \mathrm{AVaR}_{\alpha}(X)$ for all $\alpha \in [0,1]$.
\end{itemize}
\end{theorem}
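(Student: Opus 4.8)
The plan is to establish $(1)\Leftrightarrow(2)$ by the two implications $(2)\Rightarrow(1)$ and $(1)\Rightarrow(2)$, and separately $(1)\Leftrightarrow(3)$. Throughout I write $F_Z$ for the cdf of $Z$ and $q_Z(u)=\mathrm{VaR}_u(Z)$, and I use the elementary identity $\mathbb E[(t-Z)^+]=\int_{-\infty}^t F_Z(s)\,ds$, valid for every $Z\in L^\infty$ and $t\in\mathbb R$. For $(2)\Rightarrow(1)$: for a fixed $t$ the function $U_t(x):=-(t-x)^+$ is nondecreasing and concave, so $(2)$ gives $\mathbb E[U_t(X)]\le\mathbb E[U_t(Y)]$; by the identity this reads $\int_{-\infty}^t F_X\ge\int_{-\infty}^t F_Y$, and since $t$ is arbitrary this is precisely $X\preceq Y$.

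For $(1)\Rightarrow(2)$: I would use the integral representation of a nondecreasing concave function, $U(x)=a+cx+\int_{\mathbb R}U_t(x)\,\nu(dt)$ for a constant $c\ge0$ and a nonnegative locally finite measure $\nu$ (essentially $-U''$). Since $X,Y\in L^\infty$ have laws supported in a common compact interval, only $t$ from that interval contributes, Fubini applies with no integrability concern, and
\[
\mathbb E[U(X)]-\mathbb E[U(Y)]=c\bigl(\mathbb E[X]-\mathbb E[Y]\bigr)+\int_{\mathbb R}\bigl(\mathbb E[U_t(X)]-\mathbb E[U_t(Y)]\bigr)\,\nu(dt).
\]
By the computation in the previous step, $(1)$ makes each integrand $\mathbb E[U_t(X)]-\mathbb E[U_t(Y)]=-\bigl(\int_{-\infty}^t F_X-\int_{-\infty}^t F_Y\bigr)\le0$; moreover $\int_{-\infty}^t(F_X-F_Y)$ is nonnegative for every $t$ and tends to $\mathbb E[Y]-\mathbb E[X]$, so $\mathbb E[X]\le\mathbb E[Y]$. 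As $c\ge0$ and $\nu\ge0$, the right-hand side is $\le0$, which is $(2)$. (Alternatively, integrate by parts twice: $\mathbb E[U(Y)]-\mathbb E[U(X)]=\int(F_X-F_Y)\,U'\,dt=\bigl[GU'\bigr]_{-\infty}^{\infty}-\int G\,dU'$, where $G(t):=\int_{-\infty}^t(F_X-F_Y)\ge0$, $dU'\le0$, and $G,U'$ behave well at $\pm\infty$ by compact support, so the sign is forced.)

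For $(1)\Leftrightarrow(3)$: I would pass to quantiles and use Fenchel duality. Set $G_Z(t):=\int_{-\infty}^t F_Z(s)\,ds=\mathbb E[(t-Z)^+]$ and $\Phi_Z(p):=\int_0^p q_Z(u)\,du$. Then $(1)$ is the pointwise inequality $G_X\ge G_Y$ on $\mathbb R$, while multiplying $(3)$ by the positive factor $1-\alpha$ and using additivity of $\int_0^1$ rewrites it as the pointwise inequality $\Phi_X\le\Phi_Y$ on $[0,1]$. These are linked by Legendre--Fenchel conjugacy: $G_Z$ is convex in $t$, $\Phi_Z$ is its conjugate restricted to its effective domain $[0,1]$, i.e. $\Phi_Z(p)=\sup_{t\in\mathbb R}\{pt-G_Z(t)\}$ with the supremum attained at $t=q_Z(p)$, and by Fenchel--Moreau $G_Z(t)=\sup_{p\in[0,1]}\{pt-\Phi_Z(p)\}$. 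Since conjugation is order-reversing and, here, involutive, $G_X\ge G_Y\iff\Phi_X\le\Phi_Y$, i.e. $(1)\iff(3)$. The degenerate endpoints $\alpha=0$ (where $\mathrm{AVaR}_0=\mathbb E$) and $\alpha=1$ (where $\mathrm{AVaR}_1=\esssup$) are absorbed by the continuity of $\alpha\mapsto\mathrm{AVaR}_\alpha$ from Remark~\ref{cont}.

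The routine ingredients are the two identities $\mathbb E[(t-Z)^+]=\int_{-\infty}^t F_Z$ and the integral representation of a concave function, and $(2)\Rightarrow(1)$ is essentially immediate. The step that needs the most care is $(1)\Leftrightarrow(3)$: one must check that the effective domain of the conjugate is exactly $[0,1]$, that the Fenchel suprema are attained so the conjugation is tight, and that the rearranged form of $\mathrm{AVaR}_\alpha$ matches $\Phi_Z$ with the right sign and scaling. Once that bookkeeping is settled, the order-reversing property of the Legendre transform finishes the argument.
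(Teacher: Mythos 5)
Your equivalence $(1)\Leftrightarrow(2)$ is correct and self-contained: the choice $U_t(x)=-(t-x)^+$ together with $\mathbb{E}[(t-Z)^+]=\int_{-\infty}^tF_Z(s)\,ds$ gives $(2)\Rightarrow(1)$, and the integral representation of a nondecreasing concave function plus the limit $\int_{-\infty}^t(F_X-F_Y)\to\mathbb{E}[Y]-\mathbb{E}[X]$ gives $(1)\Rightarrow(2)$. (The paper offers no proof of this theorem --- it is quoted from F\"ollmer--Schied --- so the only question is whether your argument stands on its own.)

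The gap is in $(1)\Leftrightarrow(3)$, precisely at the step you yourself flagged as the delicate bookkeeping: the claim that multiplying $(3)$ by $1-\alpha$ ``rewrites it as $\Phi_X\le\Phi_Y$.'' It does not. Since $(1-\alpha)\mathrm{AVaR}_\alpha(Z)=\int_\alpha^1 q_Z(u)\,du=\mathbb{E}[Z]-\Phi_Z(\alpha)$, condition $(3)$ reads $\Phi_X(\alpha)-\Phi_Y(\alpha)\le\mathbb{E}[X]-\mathbb{E}[Y]$ for all $\alpha$, which differs from $\Phi_X\le\Phi_Y$ by the constant $\mathbb{E}[X]-\mathbb{E}[Y]$; worse, $(1)$ forces $\mathbb{E}[X]\le\mathbb{E}[Y]$ while $(3)$ at $\alpha=0$ forces $\mathbb{E}[Y]\le\mathbb{E}[X]$. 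Your conjugacy machinery ($G_Z^*=\Phi_Z$ on $[0,1]$, order reversal, hence $G_X\ge G_Y\iff\Phi_X\le\Phi_Y$) is sound, but what it actually proves is that $(1)$ is equivalent to $\int_0^\alpha q_X\le\int_0^\alpha q_Y$ for all $\alpha$ --- a \emph{lower}-tail comparison, equivalently $\mathrm{AVaR}_\alpha(-Y)\le\mathrm{AVaR}_\alpha(-X)$ --- not to $(3)$ with the paper's upper-tail $\mathrm{AVaR}$. No correct proof of $(1)\Leftrightarrow(3)$ as literally stated can exist: for the constants $X\equiv0$ and $Y\equiv1$ one has $X\preceq Y$ (and $\mathbb{E}[U(X)]\le\mathbb{E}[U(Y)]$ for every nondecreasing $U$), yet $\mathrm{AVaR}_\alpha(Y)=1>0=\mathrm{AVaR}_\alpha(X)$. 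The statement is a mis-transcription of the source, whose $\mathrm{AVaR}$ carries the opposite sign convention; once $(3)$ is corrected to $\mathrm{AVaR}_\alpha(-Y)\le\mathrm{AVaR}_\alpha(-X)$, or equivalently to $\frac1\alpha\int_0^\alpha\mathrm{VaR}_u(X)\,du\le\frac1\alpha\int_0^\alpha\mathrm{VaR}_u(Y)\,du$, your Fenchel argument does close the loop. As written, the identification of $(3)$ with $\Phi_X\le\Phi_Y$ is the false step.
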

Leitner \cite{L} showed that admitting Kusuoka representation \eqref{kusuoka} and preserving SSD are exactly the same properties of the coherent risk measure $\rho$ in the general probability space.
\begin{theorem} \cite{L}\label{leitner}
In a not necessarily atomless probability space $(\Omega, \mathcal{F},\mathbb{P})$, a coherent risk measure $\rho$ admits Kusuoka representation \eqref{kusuoka} iff $\rho$ is SSD preserving.
\end{theorem}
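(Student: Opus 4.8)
The plan is to prove the two implications separately. The direction ``Kusuoka representation $\Rightarrow$ SSD preserving'' is routine and I would dispatch it first; the converse is where the real work lies.

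For the easy direction, suppose $\rho$ is of the form \eqref{kusuoka} for some family $\mathcal M$ of probability measures on $[0,1]$, and let $X \succeq Y$. By the equivalence recalled above (cf.\ \cite{FS}: second-order stochastic dominance is the same as pointwise domination of the curves $t \mapsto \mathrm{AVaR}_t(\cdot)$), one has $\mathrm{AVaR}_t(X) \ge \mathrm{AVaR}_t(Y)$ for every $t \in [0,1)$, and this survives at $t=1$ by the continuity and limiting value in Remark \ref{cont}. Since each $\mu \in \mathcal M$ is a nonnegative measure, integrating preserves the inequality, and so does taking the supremum over $\mathcal M$; hence $\rho(X) \ge \rho(Y)$, i.e.\ $\rho$ preserves SSD. (Under the opposite sign convention for SSD all inequalities reverse simultaneously, with the same conclusion.)

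For the converse, assume $\rho$ preserves SSD. First I would note that this forces law invariance: if $X$ and $Y$ have the same law then $X \succeq Y$ and $Y \succeq X$, so $\rho(X)\ge\rho(Y)\ge\rho(X)$. Since the whole curve $(\mathrm{AVaR}_t(X))_{t\in[0,1]}$ determines the quantile function, hence the law, of $X$, law invariance shows that $\rho(X)$ depends on $X$ only through that curve; write $\rho(X)=\Phi\big((\mathrm{AVaR}_t(X))_t\big)$, with $\Phi$ defined on the convex cone $\mathcal K$ of achievable ``AVaR profiles''. Using positive homogeneity, subadditivity and translation invariance of $\rho$ together with $\mathrm{AVaR}_t(\lambda X)=\lambda\,\mathrm{AVaR}_t(X)$, $\mathrm{AVaR}_t(X+Y)\le\mathrm{AVaR}_t(X)+\mathrm{AVaR}_t(Y)$ and $\mathrm{AVaR}_t(X+c)=\mathrm{AVaR}_t(X)+c$, I would check that $\Phi$ is positively homogeneous, convex, monotone with respect to the pointwise order (this monotonicity is exactly where SSD preservation and the \cite{FS} equivalence are used), and normalized by $\Phi(c\mathbf 1)=c$. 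The remaining task is to represent such a $\Phi$: extend it by a Hahn--Banach/sandwich argument to a monotone sublinear functional on a suitable space of bounded functions on the compact interval $[0,1]$ (enlarging $\mathcal K$ if necessary to the cone of all AVaR profiles, which is closed under the operations involved), observe that every linear functional dominated by the extension is nonnegative by monotonicity, hence, after restriction to $C([0,1])$ and the Riesz representation theorem, is integration against a nonnegative Borel measure on $[0,1]$, and use $\Phi(c\mathbf 1)=c$ with positive homogeneity to pin these measures down to total mass $1$. Collecting them into a set $\mathcal M$ then gives $\rho(X)=\sup_{\mu\in\mathcal M}\int_{[0,1]}\mathrm{AVaR}_t(X)\,d\mu_t$, which is \eqref{kusuoka}.

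The hard part will be carrying out this final representation step cleanly: choosing the function space and its topology so that the separation produces genuinely countably additive nonnegative measures on $[0,1]$ rather than merely finitely additive ones, handling the endpoint $t=1$ at which $\mathrm{AVaR}_t$ equals $\esssup$ only as a monotone limit, and confirming the reduction to probability measures. Once the dictionary ``SSD ordering $\leftrightarrow$ pointwise ordering of AVaR profiles'' is in place, the rest is bookkeeping with the coherence axioms. An alternative would be to pass to an atomless enlargement of $(\Omega,\mathcal F,\mathbb P)$, apply Kusuoka's theorem there, and transport the representation back; but controlling law invariance under that passage looks no easier than the direct route above.
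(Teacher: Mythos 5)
First, a point of comparison: the paper does not actually prove this statement --- it is imported verbatim from Leitner \cite{L} and used as a black box --- so there is no in-paper argument to measure your proposal against; it has to stand on its own. Your easy direction is fine (and the parenthetical about sign conventions is genuinely needed, since the paper's definition of SSD preservation and the F\"ollmer--Schied equivalence it quotes are not consistently signed). Also, your worry about the endpoint $t=1$ is a non-issue for bounded random variables: $\mathrm{AVaR}_t(X)\to\esssup X$ as $t\to 1^-$, so the profile $t\mapsto\mathrm{AVaR}_t(X)$ is continuous on all of $[0,1]$ and one can work in $C([0,1])$ and invoke Riesz representation directly.

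The genuine gap is in the converse, exactly at the step where you ``check that $\Phi$ is \ldots convex'' and then ``extend it by a Hahn--Banach/sandwich argument.'' On a general, not necessarily atomless space, the cone $\mathcal K$ of \emph{achievable} AVaR profiles is not closed under addition, because comonotone copies of two given laws need not exist on $(\Omega,\mathcal F,\mathbb P)$: on a two-point space with masses $1/3$ and $2/3$, the only random variables realizing the laws Bernoulli$(2/3)$ and Bernoulli$(1/3)$ are antitone to each other. Hence $\lambda u_X+(1-\lambda)u_Y$ is in general not the profile of any random variable on the given space, convexity of $\Phi$ is not even well posed on $\mathcal K$, and it certainly does not follow from the coherence axioms by the check you describe. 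Worse, the inequality that coherence does give you points the wrong way for the natural monotone extension $\bar\Phi(f):=\inf\{\rho(X): u_X\ge f\}$: subadditivity of $\bar\Phi$ would require, given $u_{X_1}\ge f_1$ and $u_{X_2}\ge f_2$, some achievable profile dominating $f_1+f_2$ with $\rho$-value at most $\rho(X_1)+\rho(X_2)$, but the only candidate coherence produces is $u_{X_1+X_2}$, which satisfies $u_{X_1+X_2}\le u_{X_1}+u_{X_2}$ rather than $\ge$. In the atomless case this is repaired by passing to comonotone rearrangements, for which $u_{X_1'+X_2'}=u_{X_1}+u_{X_2}$ exactly; in the general case this repair is unavailable, and that is precisely the difficulty Leitner's note exists to resolve. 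Your sketch hides it inside ``enlarging $\mathcal K$ if necessary to the cone of all AVaR profiles, which is closed under the operations involved'': the enlarged cone is indeed closed under addition, but nothing in your outline shows that $\Phi$ admits a monotone \emph{sublinear} extension to it that still agrees with $\rho$ on the achievable part. Until that step is supplied (for instance by constructing an extension of $\rho$ to an atomless enlargement that provably preserves coherence, law invariance and SSD monotonicity --- the route you dismiss as no easier), the hard direction is not proved.
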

Next, we give the concept that is introduced in \cite{NG} and \cite{PS} independently.
\begin{definition}
Given a not necessarily atomless probability space $(\Omega, \mathcal{G}, \mathbb{P})$, a law invariant mapping $\rho(X)$ on $(\Omega, \mathcal{G}, \mathbb{P})$ is called a functionally coherent risk measure, if there exists a law invariant coherent risk measure $\varrho(X)$ defined on the standard atomless probability space $(\Omega, \mathcal{F}, \mathbb{P})$ such that $\rho(X) = \varrho \lvert_{(\Omega, \mathcal{F}, \mathbb{P})} (X) $. In addition to the above property, if $\rho(X)$ is also comonotone additive, then we say that $\rho(X)$ is a functionally coherent and comonotone additive risk measure.
\end{definition}
We conclude this section with the definition introduced in \cite{HKP} and \cite{AFS}  related to the underlying probability distribution of the random variable $X$.
\begin{definition}
If the underlying probability distribution is discrete and finite with $|\Omega| = n$, then a coherent risk measure $\rho$ is said to be permutation invariant if $\rho(X_{\pi}) = \rho(X)$ for every permutation $\pi \in S_n$, where $S_n$ is the set of all permutations of $\{1,2,...,n\}$ and $X_{\pi}$ denotes the permuted vector, i.e. $X_{\pi} = (x_{\pi(1)},...,x_{\pi(n)})$. A coherent risk measure $\rho(X): \mathbb{R}^n \rightarrow \mathbb{R}$ that is also permutation invariant is called a natural risk statistic.
\end{definition}
\section{Main Results}\label{Rationality}
In this section, we give our main results along with the proofs. 
The first result shows that in the finite discrete uniform probability space, any law invariant coherent risk measure preserves the SSD property.
\begin{theorem}
\label{main_thm1}
A law invariant coherent risk measure $\rho$ on the uniform discrete probability space is SSD preserving.
\end{theorem}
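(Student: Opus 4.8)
The plan is to combine the Fenchel--Moreau representation \eqref{coh_rep} of $\rho$ with the finite, permutation--symmetric nature of the uniform discrete space: I will show that $\rho$ is a supremum of convex combinations of the Average Values--at--Risk at the rational levels $0,\tfrac1n,\dots,\tfrac{n-1}n$, and then deduce SSD--preservation from the fact that each $\mathrm{AVaR}_\alpha$ respects $\succeq$ (the equivalence of \cite{FS} recalled above).

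Identify an r.v.\ with a vector $x=(x_1,\dots,x_n)\in\mathbb{R}^n$, so that \eqref{coh_rep} reads $\rho(x)=\sup_{q\in\mathcal{Q}}\sum_{i=1}^n q_i x_i$ for some convex set $\mathcal{Q}$ of probability vectors on $\{1,\dots,n\}$ (the supremum being attained, as the space is finite). Since the atoms are equally likely, $x$ and $x_\pi$ are equidistributed for every $\pi\in S_n$, hence $\rho(x)=\rho(x_\pi)$ and therefore
\[
\rho(x)=\max_{\pi\in S_n}\ \sup_{q\in\mathcal{Q}}\ \sum_{i=1}^n q_i x_{\pi(i)}=\sup_{q\in\mathcal{Q}}\ \sum_{i=1}^n q_{[i]}x_{[i]},
\]
where $x_{[1]}\le\cdots\le x_{[n]}$ and $q_{[1]}\le\cdots\le q_{[n]}$ are the increasing rearrangements and the last equality is the rearrangement inequality (for fixed $q$ the maximal pairing sorts the two vectors the same way).

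Next set $q_{[0]}:=0$ and apply Abel summation: $\sum_{i=1}^n q_{[i]}x_{[i]}=\sum_{k=1}^n\bigl(q_{[k]}-q_{[k-1]}\bigr)\sum_{i=k}^n x_{[i]}$. Directly from \eqref{rep1}, on the uniform discrete space $\tfrac1{n-k+1}\sum_{i=k}^n x_{[i]}=\mathrm{AVaR}_{(k-1)/n}(x)$, so putting $c_k(q):=(n-k+1)\bigl(q_{[k]}-q_{[k-1]}\bigr)\ge 0$ gives
\[
\rho(x)=\sup_{q\in\mathcal{Q}}\ \sum_{k=1}^n c_k(q)\,\mathrm{AVaR}_{(k-1)/n}(x),
\]
and a short telescoping computation yields $\sum_{k=1}^n c_k(q)=\sum_{i=1}^n q_{[i]}=1$ for every $q\in\mathcal{Q}$. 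In particular $\rho$ admits a representation of Kusuoka type \eqref{kusuoka} with $\mathcal{M}=\{\,\sum_k c_k(q)\,\delta_{(k-1)/n}:q\in\mathcal{Q}\,\}$.

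To finish, suppose $X\succeq Y$. By the equivalence of \cite{FS} above, $\mathrm{AVaR}_\alpha(X)\ge\mathrm{AVaR}_\alpha(Y)$ for every $\alpha\in[0,1]$, in particular for $\alpha=(k-1)/n$; since the weights $c_k(q)$ are nonnegative, $\sum_k c_k(q)\,\mathrm{AVaR}_{(k-1)/n}(X)\ge\sum_k c_k(q)\,\mathrm{AVaR}_{(k-1)/n}(Y)$ for each $q\in\mathcal{Q}$, and taking the supremum over $q$ gives $\rho(X)\ge\rho(Y)$. (Having exhibited a Kusuoka--type representation, one could instead quote Leitner's Theorem \ref{leitner} verbatim.) I expect the substantive points to be the first two reductions --- that on a finite uniform space the Fenchel--Moreau dual set is genuinely a convex set of probability vectors, and that law invariance there amounts to permutation invariance --- together with the rearrangement step and the identification of $\tfrac1{n-k+1}\sum_{i\ge k}x_{[i]}$ with $\mathrm{AVaR}_{(k-1)/n}$; ties among the $x_i$ cause no trouble, since the quantile computation is insensitive to them (or one may perturb and invoke the continuity of $\alpha\mapsto\mathrm{AVaR}_\alpha$ from Remark \ref{cont}).
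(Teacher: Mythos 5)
Your proof is correct, but it takes a genuinely different route from the paper's. The paper argues directly from the axioms: since the atoms are equally likely, the SSD relation between the vectors $X$ and $Y$ reduces to weak majorization, which by Hardy--Littlewood--P\'olya \cite{HLP} gives a doubly stochastic matrix $A$ with $X\le AY$; Birkhoff's theorem \cite{B} writes $A$ as a convex combination of permutation matrices, and then monotonicity, convexity and law invariance yield $\rho(X)\le\rho(Y)$ in three lines --- no dual representation and no mention of $\mathrm{AVaR}$ at all. You instead start from the Fenchel--Moreau representation \eqref{coh_rep}, use permutation invariance plus the rearrangement inequality to reduce the dual set to monotone probability vectors, and then Abel summation to convert each $\sum_i q_{[i]}x_{[i]}$ into a convex combination of $\mathrm{AVaR}_{(k-1)/n}$; SSD--preservation then follows from the $\mathrm{AVaR}$ characterization of SSD (or Leitner's Theorem \ref{leitner}). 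The trade-off: the paper's argument is more elementary, needs neither duality nor the attainment of the supremum, and in fact works verbatim for merely convex (not positively homogeneous) law-invariant measures; your argument is heavier machinery for this one statement, but it proves strictly more --- it already delivers the discrete Kusuoka-type representation $\rho(X)=\sup_{q}\sum_k c_k(q)\,\mathrm{AVaR}_{(k-1)/n}(X)$ with nonnegative weights summing to one, which is essentially the content the paper has to re-derive later via Theorems 3.4--3.9 on the way to Theorems \ref{main_thm2} and \ref{main_thm3}. Two small remarks: attainment of the supremum over $\mathcal{Q}$ is not actually needed (and, if invoked, requires $\mathcal{Q}$ closed, which costs nothing after taking closures); and your inequality $X\succeq Y\Rightarrow\mathrm{AVaR}_\alpha(X)\ge\mathrm{AVaR}_\alpha(Y)$ is the one consistent with Definition 2.1 and with the definition of SSD preservation, even though the statement of the F\"ollmer--Schied equivalence as quoted in Section 2 has the roles of $X$ and $Y$ reversed --- that inconsistency is in the paper, not in your argument.
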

\begin{proof}
Given two r.v.'s $X$ and $Y$, denote them with 
\begin{align}
X &= (x_1,x_2,...,x_n) \nonumber \\
Y &= (y_1,y_2,...,y_n)
\end{align}
Since the probabilities of all elementary events are equal, the SSD relation coincides with the concept of weak majorization (see \cite{MO}).
\begin{equation}
[ X \preceq Y ] \iff [ \sum_{k=1}^n x_{[k]} \leq \sum_{k=1}^n y_{[k]} ], 
\end{equation}
where $x_{[k]}$ denotes the kth smallest component of X. It follows from the theorem by Hardy, Littlewood and Polya (see \cite{HLP} and Proposition D.2.b in \cite{MO}) that weak majorization is equivalent to the existence of a doubly stochastic matrix $A$ such that 
\begin{equation}
X \leq AY.
\end{equation}
According to Birkhoff's theorem (see \cite{B}), the doubly stochastic matrix $A$ is a convex combination of permutation matrices, i.e. there are permutation matrices $B_j$ and weights $\alpha_j$, $j=1,..,n$ with $\alpha_j \geq 0$ and $\sum_{j=1}^n \alpha_j = 1$ such that 
\begin{equation}
A = \sum_{j=1}^n \alpha_j B_j
\end{equation}
Define random variables $Z_j : \Omega \rightarrow \mathbb{R}$ such that $Z_j(w_i)$ is the ith component of $B_j y$. Then $Z_j, j = 1,...,n$ all have the same distribution as $Y$ and it follows from above that %
\begin{equation}
X \leq \sum_{j=1}^n \alpha_j Z_j
\end{equation}
Monotonicity,convexity and law invariance imply 
\begin{align}
\rho(X) &\leq \rho(  \sum_{j=1}^n \alpha_j Z_j ) \leq \sum_{j=1}^n \alpha_j \rho(Z_j)
\nonumber\\
&= \sum_{j=1}^n \alpha_j \rho(Y) = \rho(Y)
\end{align}
so we complete the proof.
\end{proof}
Theorem 3.1 will lead to the simplified characterizations of the coherent risk measures in the finite discrete uniform probability space. We state these two representations below.
\begin{theorem}
\label{main_thm2}
Let $\mathbb{P}:= \{ W \in \mathbb{R}^n | \sum_{j=1}^n w_j = 1,\quad w_j \geq 0, \quad j = 1,...,n \}$ and $\mathbb{D} := \{ X \in \mathbb{R}^n | x_1 \leq x_2 \leq ... \leq x_n \}$. Any coherent and comonotone additive risk measure $\rho$ in the uniform discrete probability space has the following form 
\begin{equation}
\rho(X) = \sum_{i=0}^{n-1} \mu_i \mathrm{AVaR}_{i/n} (X), \qquad \forall X \in \mathbb{R}^n,
\end{equation}
where
\begin{align}
0 &\leq \mu_i \leq 1, \text{ for all } i=0,1,...,n-1  \nonumber \\
\sum_{i=0}^{n-1} \mu_i &= 1 \nonumber \\
\mathrm{AVaR}_{\frac{i}{n}}(X) &= \frac{1}{n-i}\left( X_{[i+1]} +...+ X_{[n]} \right).
\\\nonumber
\end{align}
i.e. $\mathrm{AVaR_{\frac{i}{n}}(X)}$ satisfies
\begin{align}
\mathrm{AVaR}_{\frac{i}{n}}(X) &= \max_{W \in \mathbb{P}\cap \mathbb{D} } \langle W,X_{os} \rangle  \\\nonumber
&= w_1 X_{[1]} + w_2 X_{[2]} + ... + w_n X_{[n]}\nonumber\\
0 &\leq w_j \leq \frac{1}{n - i}, \text{ for all } j=1,2,...,n \nonumber\\
0 &\leq w_1 \leq w_2 \leq ... \leq w_n \leq 1,  \nonumber\\
\sum_{j=1}^n w_j &= 1 \nonumber
\end{align}
\end{theorem}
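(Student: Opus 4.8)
The plan is to use comonotone additivity to collapse $\rho$ to a single linear functional on the cone $\mathbb{D}$ of non-decreasingly ordered vectors, then to read off from the remaining coherence axioms that the associated weight vector lies in $\mathbb{P}\cap\mathbb{D}$, and finally to decompose any such vector over the vertices of that polytope, which turn out to be exactly the discrete $\mathrm{AVaR}_{i/n}$ weight vectors $W^{(i)}$. Concretely: on the uniform discrete space a random variable is a vector $X=(x_1,\dots,x_n)\in\mathbb{R}^n$, and monotonicity together with translation invariance gives $-\|X\|_\infty\le\rho(X)\le\|X\|_\infty$, so $\rho$ is real-valued everywhere. Law invariance means $\rho(X)=\rho(X_{os})$ with $X_{os}=(X_{[1]},\dots,X_{[n]})$ the non-decreasing rearrangement, so $\rho$ is determined by its restriction to $\mathbb{D}$. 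Since any two vectors in $\mathbb{D}$ are comonotone and $\mathbb{D}$ is closed under addition, comonotone additivity forces $\rho|_{\mathbb{D}}$ to be additive; combined with positive homogeneity and $\rho(0)=0$, and using that $\mathbb{D}$ spans $\mathbb{R}^n$ with $\mathbb{D}-\mathbb{D}=\mathbb{R}^n$, it follows that $\rho|_{\mathbb{D}}$ extends to a unique linear functional, i.e. there is $W\in\mathbb{R}^n$ with $\rho(X)=\langle W,X_{os}\rangle$ for every $X$.

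Next I would pin down $W$. Applying translation invariance to $X+c\mathbf{1}$ gives $\langle W,\mathbf{1}\rangle=1$, i.e. $\sum_j w_j=1$; applying monotonicity to small perturbations of an interior point of $\mathbb{D}$ gives $w_j\ge 0$ for all $j$. To obtain $0\le w_1\le\cdots\le w_n$ I would use subadditivity on carefully chosen test vectors: for a fixed $i\in\{1,\dots,n-1\}$ take $X=\mathbf{1}_A$, $Y=\mathbf{1}_B$ for sets $A,B$ of size $n-i$ with $|A\cap B|=n-i-1$; then $\langle W,X_{os}\rangle=\langle W,Y_{os}\rangle=w_{i+1}+\cdots+w_n$, while $X+Y$ has ordered vector whose top $n-i-1$ entries equal $2$, next two equal $1$, and the rest $0$, so $\langle W,(X+Y)_{os}\rangle=(w_i+w_{i+1})+2(w_{i+2}+\cdots+w_n)$, and subadditivity reduces exactly to $w_i\le w_{i+1}$. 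Hence $W\in\mathbb{P}\cap\mathbb{D}$. (The converse, that every such $W$ does give a coherent comonotone additive risk measure, follows from the rearrangement inequality, which writes $\langle W,X_{os}\rangle=\max_{\sigma\in S_n}\langle \sigma\!\cdot\!W,X\rangle$, a maximum of probability-weighted linear functionals.)

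Then I would identify the vertices of the polytope $\mathbb{P}\cap\mathbb{D}=\{w:0\le w_1\le\cdots\le w_n,\ \sum_j w_j=1\}$, which is cut out by the equality $\sum_j w_j=1$ and the $n$ inequalities $0\le w_1$, $w_1\le w_2$, \dots, $w_{n-1}\le w_n$. At a vertex $n-1$ of these inequalities are tight, and the vertex in which all are tight except the $i$th one is precisely $W^{(i)}:=\tfrac{1}{n-i}(\underbrace{0,\dots,0}_{i},1,\dots,1)$, $i=0,\dots,n-1$. By the Minkowski/Krein--Milman theorem any $W\in\mathbb{P}\cap\mathbb{D}$ is therefore $\sum_{i=0}^{n-1}\mu_i W^{(i)}$ with $\mu_i\ge 0$ and $\sum_i\mu_i=1$, giving $\rho(X)=\sum_{i=0}^{n-1}\mu_i\langle W^{(i)},X_{os}\rangle=\sum_{i=0}^{n-1}\mu_i\cdot\tfrac{1}{n-i}(X_{[i+1]}+\cdots+X_{[n]})$. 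Finally I would check $\langle W^{(i)},X_{os}\rangle=\mathrm{AVaR}_{i/n}(X)$ and the displayed $\max$-form: on the uniform space $\mathrm{VaR}_u(X)=X_{[k]}$ for $u\in((k-1)/n,k/n]$, so integrating \eqref{rep1} from $i/n$ to $1$ yields $\mathrm{AVaR}_{i/n}(X)=\tfrac{1}{n-i}(X_{[i+1]}+\cdots+X_{[n]})$; and specializing the dual representation \eqref{avar_rep} to this space (densities $h$ correspond to $w_j=h(\omega_j)/n$ with $\sum_j w_j=1$, $0\le w_j\le\tfrac{1}{n-i}$) together with the rearrangement inequality shows the supremum there is attained by placing weight $\tfrac{1}{n-i}$ on the $n-i$ largest coordinates, which for $X_{os}$ is the admissible vector $W^{(i)}\in\mathbb{P}\cap\mathbb{D}$, whence $\mathrm{AVaR}_{i/n}(X)=\max_{W\in\mathbb{P}\cap\mathbb{D},\,w_j\le 1/(n-i)}\langle W,X_{os}\rangle=w_1X_{[1]}+\cdots+w_nX_{[n]}$ at the maximizer.

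The steps I expect to be the main obstacle are deducing $w_1\le\cdots\le w_n$ from subadditivity, which hinges on choosing the right comonotone-by-rearrangement test vectors, and the clean identification of the vertices of $\mathbb{P}\cap\mathbb{D}$ with the $W^{(i)}$; once those are in hand, the convex decomposition and the $\mathrm{AVaR}$ bookkeeping are routine. An alternative opening, closer to the paper's narrative, is to invoke Theorem~\ref{main_thm1} and Leitner's Theorem~\ref{leitner} to get a Kusuoka representation on the discrete space and then use the comonotone part of Kusuoka's theorem through functional coherence; however, the direct route above also delivers the $\max$-representation for free.
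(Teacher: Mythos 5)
Your proof is correct, but it follows a genuinely different route from the paper. The paper goes through infinite-dimensional machinery: it invokes Theorem~\ref{main_thm1} (SSD preservation, proved via Hardy--Littlewood--P\'olya and Birkhoff), Leitner's Theorem~\ref{leitner} and the functional-coherence results of Noyan--Rudolf to obtain a Kusuoka representation $\rho(X)=\int_0^1 \mathrm{AVaR}_p(X)\,d\mu(p)$ with a single measure $\mu$ on $[0,1]$, and then exploits the fact that on the $n$-point uniform space $\alpha\mapsto\mathrm{AVaR}_\alpha(X)$ is piecewise linear on the grid $\{i/n\}$ to collapse the integral into the finite sum $\sum_i \mu_i\,\mathrm{AVaR}_{i/n}(X)$, finally arguing $\mu_n=0$ from comonotone additivity. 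Your argument never leaves $\mathbb{R}^n$: comonotone additivity plus law invariance linearizes $\rho$ on the ordered cone $\mathbb{D}$ (the Schmeidler/Choquet-integral mechanism in finite dimensions), the remaining coherence axioms pin the weight vector into the polytope $\mathbb{P}\cap\mathbb{D}$ --- your indicator-vector test for $w_i\le w_{i+1}$ is the one nontrivial computation, and it checks out --- and the identification of the $n$ vertices of that polytope with the vectors $W^{(i)}=\tfrac{1}{n-i}(0,\dots,0,1,\dots,1)$ yields the $\mathrm{AVaR}$ mixture by convex decomposition. Your route buys several things the paper's does not: it is elementary and self-contained (no appeal to atomless-space theorems or to Theorem~\ref{main_thm1}), it delivers the $\max$-representation over $\mathbb{P}\cap\mathbb{D}$ as a byproduct rather than via the separate natural-risk-statistic theorem, and since the $W^{(i)}$ are affinely independent the polytope is a simplex, so you get uniqueness of the weights $\mu_i$ for free --- something the paper's argument does not make explicit. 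You also sketch the converse (every such mixture is coherent and comonotone additive), which the theorem does not strictly require but which rounds out the characterization. The paper's route, by contrast, situates the result as a discretization of Kusuoka's theorem and reuses infrastructure needed for Theorem~\ref{main_thm3}; if you wanted your argument to feed into that theorem as well you would still need the supremum-over-measures step that the Kusuoka approach provides.
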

Similarly, for the coherent risk measure $\rho$, which is not comonotone additive, we have the following characterization.
\begin{theorem}
\label{main_thm3}
Let $\mathbb{P}$ and $\mathbb{D}$ be as above. Any coherent but not comonotone additive risk measure $\rho$ in the uniform discrete probability space has the following form 
\begin{equation}
\rho(X) = \sum_{i=0}^{n-1} \mu_i \mathrm{AVaR}_{i/n} (X) + \mu_n \mathrm{AVaR}_{1} (X), \qquad \forall X \in \mathbb{R}^n,
\end{equation}
\begin{align}
0 &\leq w_j \leq \frac{1}{n - i}, \forall j=1,2,...,n \nonumber\\ 
0 &\leq w_1 \leq w_2 \leq ... \leq w_n \leq 1,  \nonumber\\
\sum_{j=1}^n w_j &= 1  \nonumber \\
0 &\leq \mu_i \leq 1 \nonumber \text{ for all } i = 0,1,...,n\\
\mu_n &> 0 \nonumber \\
\sum_{i=0}^{n} \mu_i &= 1 \nonumber 
\end{align}
\end{theorem}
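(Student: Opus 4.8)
The plan is to run the argument of Theorem~\ref{main_thm2} without the comonotone-additivity hypothesis, keeping track of the extra degree of freedom that this releases. By Theorem~\ref{main_thm1}, $\rho$ is SSD preserving on the uniform discrete probability space, hence by Theorem~\ref{leitner} it admits a Kusuoka representation
\[
\rho(X)=\sup_{\mu\in\mathcal{M}}\int_{[0,1]}\mathrm{AVaR}_t(X)\,d\mu_t ,
\]
where $\mathcal{M}$ is a family of Borel probability measures on $[0,1]$; (alternatively one can start from the finite-dimensional dual representation \eqref{coh_rep} and symmetrize the dual set over $S_n$, obtaining the same object). What remains is to collapse each inner integral onto the finitely many levels $0,1/n,\dots,(n-1)/n$ together with the endpoint level $1$.

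The main ingredient is a short computation with \eqref{rep1} and the explicit quantile function of a random variable $X=(x_1,\dots,x_n)$ taking $n$ equally likely values. It shows that $t\mapsto\mathrm{AVaR}_t(X)$ is, on each subinterval $[(k-1)/n,\,k/n]$, a convex combination of its two endpoint values $\mathrm{AVaR}_{(k-1)/n}(X)$ and $\mathrm{AVaR}_{k/n}(X)$ with coefficients depending only on $t$ (not on $X$), and is constant on $((n-1)/n,1]$, equal to $x_{[n]}=\mathrm{AVaR}_1(X)$. The same computation yields the order-statistics form $\mathrm{AVaR}_{i/n}(X)=\frac{1}{n-i}(x_{[i+1]}+\dots+x_{[n]})$ and its dual/LP description $\mathrm{AVaR}_{i/n}(X)=\max_{W\in\mathbb{P}\cap\mathbb{D}}\langle W,X_{os}\rangle$ under the constraints $0\le w_j\le\frac{1}{n-i}$, $w_1\le\dots\le w_n$, $\sum_j w_j=1$ (these are shared with Theorem~\ref{main_thm2}). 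Integrating $t\mapsto\mathrm{AVaR}_t(X)$ against an arbitrary $\mu$ and redistributing the mass of each subinterval to its two endpoints gives
\[
\int_{[0,1]}\mathrm{AVaR}_t(X)\,d\mu_t=\sum_{i=0}^{n-1}\mu_i\,\mathrm{AVaR}_{i/n}(X)+\mu_n\,\mathrm{AVaR}_1(X),
\]
with $\mu_n=\mu(\{1\})$, the $\mu_i\ge0$ explicit linear functionals of $\mu$ alone, $\sum_{i=0}^{n}\mu_i=1$ and $0\le\mu_i\le1$.

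Plugging this back, $\rho$ becomes a supremum over $\mathcal{M}$ of expressions of the announced type. If $\rho$ is comonotone additive, the comonotone clause of Kusuoka's theorem places the supremum on a single $\mu^{*}$ and one recovers Theorem~\ref{main_thm2}, with the $\mathrm{AVaR}_1$ term absorbed into $\mathrm{AVaR}_{(n-1)/n}$. If $\rho$ is not comonotone additive, I would invoke the Remark after the definition of comonotone additivity — $\mathrm{AVaR}_\alpha$ is comonotone additive at every level at which it differs from the essential supremum, whereas $\esssup=\mathrm{AVaR}_1$ is not — to conclude that the essential-supremum direction must enter with strictly positive mass, i.e. $\mu_n>0$.

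The delicate point, and the one I expect to absorb most of the work, is this last transition: moving from the genuine supremum $\sup_{\mu\in\mathcal{M}}[\,\cdots]$ to the single weighted sum in the statement — that is, showing that on the $n$-atom uniform space this supremum is actually attained — and then pinning $\mu_n>0$ down as exactly the obstruction to comonotone additivity. Concretely this means analyzing the geometry of the convex set of admissible order-statistics weight vectors $W\in\mathbb{P}\cap\mathbb{D}$ with $w_j\le\frac1{n-i}$ and extracting from it one weight profile $(\mu_0,\dots,\mu_n)$; the remainder is the bookkeeping of the previous paragraph.
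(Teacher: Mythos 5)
Your outline reproduces the paper's route almost step for step: Theorem \ref{main_thm1} together with Theorem \ref{leitner} (equivalently, the Noyan--Rudolf equivalence) gives the Kusuoka form; the piecewise-linear interpolation of $t\mapsto\mathrm{AVaR}_t(X)$ on the grid $\{0,\tfrac1n,\dots,1\}$ collapses each integral to a finite sum (this is exactly the computation in the proof of Theorem \ref{main_thm2}, which the paper invokes here with ``modulo the same arguments''); and the claim $\mu_n>0$ is obtained, as in the paper, from the comonotone additivity of $\mathrm{AVaR}_{i/n}$ for $i\le n-1$. So up to the supremum form \eqref{son} you and the paper agree.

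The step you flag as delicate --- replacing $\sup_{\mu\in\mathcal M}\sum_i \mu_i\,\mathrm{AVaR}_{i/n}(X)$ by a single weighted sum valid for all $X$ --- is precisely where your proposal stops, and it is a genuine gap: you never extract the promised single profile $(\mu_0,\dots,\mu_n)$. The paper disposes of this in one line by compactness of the unit simplex (Heine--Borel), but that only yields, for each fixed $X$, a maximizer $\mu(X)$; it does not produce one $\mu$ independent of $X$, which is what the stated theorem asserts. In fact no argument can: take $n=3$, $\mathcal W=\operatorname{conv}\{(1/6,1/6,2/3),\,(0,1/2,1/2)\}\subset\mathbb P\cap\mathbb D$ and $\rho(X)=\max_{W\in\mathcal W}\langle W,X_{os}\rangle$, a law invariant coherent risk measure by Theorem 3.8 which is not comonotone additive (the maximizer switches between the two endpoints). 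Matching $\rho$ at $X_{os}=(0,0,1)$ and $(0,1,1)$ forces $(\mu_0,\mu_1,\mu_2)=(0,2/3,1/3)$, yet that mixture returns $5/3$ at $X_{os}=(0,1,2)$ while $\rho$ returns $3/2$. So the reduction to a single sum fails, and the honest endpoint of your (and the paper's) argument is the outer supremum of \eqref{son}, as in \cite{NG}. A further caution about your $\mu_n>0$ mechanism: on a finite space two comonotone vectors are simultaneously nondecreasing along a common ordering of the atoms, so $\esssup(X+Y)=\esssup X+\esssup Y$ for comonotone $X,Y$ and $\mathrm{AVaR}_1=\max=\mathrm{AVaR}_{(n-1)/n}$ is itself comonotone additive there (the paper's Example 2.9 does not contradict this, since the $X,Y$ exhibited are not comonotone: compare $\omega_1$ and $\omega_3$); hence a positive weight on $\mathrm{AVaR}_1$ cannot by itself account for the failure of comonotone additivity.
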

In the rest of the paper, we give the proofs of the Theorem 3.2 and Theorem 3.3 and also derive several corollaries on the way. We borrow a result from \cite{NG}, which gives the equivalence of functional coherence and Kusuoka representation on a general probability space.
\begin{theorem}\cite{NG}
Consider a (not necessarily atomless) probability space $(\Omega,\mathcal{G},\mathbb{P})$, and a value $p \in [1,\infty]$.
\begin{itemize}
\item A mapping $\rho$ is a functionally coherent risk measure if and only if it has the Kusuoka representation of the form \eqref{kusuoka} for some family of probability measures $\mathcal{M}\subset \mathbb{P}$.
\item A mapping $\rho$ is a functionally coherent and comonotone additive risk measure if and only if it has the representation \eqref{kusuoka_com}.
\end{itemize}
\end{theorem}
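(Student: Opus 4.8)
The plan is to route everything through Kusuoka's theorem on the standard atomless space $(\Omega,\mathcal F,\mathbb P)$, with the law invariance of $\mathrm{AVaR}_t$ as the bridge. The key elementary observations are: (i) by \eqref{rep1}--\eqref{rep2}, $\mathrm{AVaR}_t(Z)$ depends on $Z$ only through its distribution; and (ii) every distribution of a random variable on $(\Omega,\mathcal G,\mathbb P)$ is also the distribution of some random variable on $(\Omega,\mathcal F,\mathbb P)$ (push a uniform through the quantile transform). Consequently, the operation of distributionally restricting a functional from the atomless space down to $(\Omega,\mathcal G,\mathbb P)$ commutes with the Kusuoka integral $\int_{[0,1]}\mathrm{AVaR}_t(\cdot)\,d\mu_t$. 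The first bullet then falls out in both directions: given a Kusuoka representation of $\rho$ on $(\Omega,\mathcal G,\mathbb P)$, define $\varrho$ on $(\Omega,\mathcal F,\mathbb P)$ by the same formula, which Kusuoka's theorem certifies to be a law invariant coherent risk measure, and which by (i)--(ii) restricts to $\rho$; conversely, if $\rho$ is the distributional restriction of a law invariant coherent $\varrho$ on the atomless space, apply Kusuoka's theorem to $\varrho$ and restrict.

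For the easy half of the second bullet, if $\rho(X)=\int_{[0,1]}\mathrm{AVaR}_t(X)\,d\mu^*_t$, I take the extension $\varrho$ given by the same formula; Kusuoka's theorem makes it law invariant and coherent, and I would verify comonotone additivity by hand: for a comonotone pair $Z_1,Z_2$ the quantile functions add, $\mathrm{VaR}_u(Z_1+Z_2)=\mathrm{VaR}_u(Z_1)+\mathrm{VaR}_u(Z_2)$ for $u\in(0,1)$ (with the corresponding identity for essential suprema), hence $\mathrm{AVaR}_t(Z_1+Z_2)=\mathrm{AVaR}_t(Z_1)+\mathrm{AVaR}_t(Z_2)$ for every $t\in[0,1]$, and integrating a family of comonotone additive functionals against $\mu^*$ keeps the property; the restriction to $(\Omega,\mathcal G,\mathbb P)$ also keeps it, since any comonotone pair there is realized in law by a comonotone pair on the atomless space whose sum has the law of the original sum. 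Hence $\rho$ is functionally coherent and comonotone additive.

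The substantive direction, and the one I expect to be the main obstacle, is the forward implication of the second bullet: comonotone additivity is assumed for $\rho$ only on the coarse space $(\Omega,\mathcal G,\mathbb P)$, so one cannot directly invoke the comonotone clause of Kusuoka's theorem for the atomless extension. My preferred resolution bypasses the extension via a distortion representation. By a Schmeidler-type argument, a monotone, positively homogeneous, translation invariant, comonotone additive functional is the Choquet integral $\rho(X)=\int X\,d\hat{\mathbb P}$ against the set function $\hat{\mathbb P}(A)=\rho(\mathbf 1_A)$; convexity forces $\hat{\mathbb P}$ to be submodular, and testing on indicators together with law invariance forces $\hat{\mathbb P}=\psi\circ\mathbb P$ for a nondecreasing $\psi$ with $\psi(0)=0$, $\psi(1)=1$, which submodularity makes concave on the range of $\mathbb P$. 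Since a Choquet integral against $\psi\circ\mathbb P$ is determined by the values of $\psi$ on that range, I replace $\psi$ by its piecewise-linear concave interpolant and invoke the classical bijection between concave distortion functions and probability mixtures of $\{\mathrm{AVaR}_t\}_{t\in[0,1]}$ (see \cite{FS}) to conclude $\rho(X)=\int_{[0,1]}\mathrm{AVaR}_t(X)\,d\mu^*_t$, i.e.\ \eqref{kusuoka_com}. The points needing care are the applicability of the Schmeidler-type representation on the present domain, the equivalence ``submodular capacity $\Longleftrightarrow$ concave distortion'' when $\psi$ is constrained only on the values taken by $\mathbb P$, and the extremal description of concave distortions by the $\mathrm{AVaR}_t$. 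A fully internal alternative is to keep the Kusuoka family $\mathcal M$ from the first bullet and show that comonotone additivity of the restriction forces the affine functional $\mu\mapsto\int_{[0,1]}\mathrm{AVaR}_t(X)\,d\mu_t$ to attain its supremum over $\mathcal M$ at one fixed $\mu^*$ for all laws $X$ realizable on $(\Omega,\mathcal G,\mathbb P)$, whose single-measure Kusuoka functional then restricts to $\rho$; establishing that this supremum is genuinely independent of $X$ is the technical heart of that route.
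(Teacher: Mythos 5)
The paper offers no proof of this statement --- it is imported verbatim from \cite{NG} --- so I can only judge your argument on its own terms. Your treatment of the first bullet and of the ``if'' half of the second bullet is correct and is the natural argument: law invariance of $\mathrm{AVaR}_t$ together with the quantile transform lets the Kusuoka integral pass back and forth between $(\Omega,\mathcal G,\mathbb P)$ and the atomless space, and comonotone additivity of each $\mathrm{AVaR}_t$ survives both mixing against $\mu^*$ and distributional restriction.

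The gap is in the ``only if'' half of the second bullet, precisely at the step you flag but do not resolve. Your Choquet route uses only law invariance, coherence and comonotone additivity of $\rho$ on the coarse space, and it hinges on the claim that submodularity of $\hat{\mathbb P}=\psi\circ\mathbb P$ makes $\psi$ concave on the range of $\mathbb P$. That claim is false on a general probability space, and without it the route would prove a false statement. Take $\Omega=\{a,b\}$ with $\mathbb P(a)=1/3$, $\mathbb P(b)=2/3$, and set $\psi(0)=0$, $\psi(1/3)=0.1$, $\psi(2/3)=0.95$, $\psi(1)=1$. The only nontrivial submodularity constraint is $\psi(1/3)+\psi(2/3)\geq 1$, which holds, yet $\psi$ admits no concave extension to $[0,1]$ vanishing at $0$ (concavity would force $\psi(2/3)\leq 2\psi(1/3)$). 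The associated Choquet integral is a law invariant, coherent, comonotone additive risk measure with $\rho(\mathbf 1_{\{a\}})=0.1<\mathbb P(a)$, so it cannot have the form \eqref{kusuoka_com}, since every mixture of $\mathrm{AVaR}_t$ dominates the expectation. This $\rho$ is of course not functionally coherent, so the theorem itself is safe; but it shows the functional coherence hypothesis must actively enter the forward argument, whereas your primary route never invokes it. (In the uniform discrete setting this paper actually needs, your route does close: submodularity applied to equal-cardinality events gives $\psi(\tfrac{k+1}{n})+\psi(\tfrac{k-1}{n})\leq 2\psi(\tfrac{k}{n})$, genuine discrete concavity, and the piecewise-linear interpolant argument then works. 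But the theorem is stated for general spaces.) Your fallback route --- showing the supremum over $\mathcal M$ from the first bullet is attained at one $\mu^*$ uniformly in $X$ --- is a viable alternative, but you explicitly defer its ``technical heart,'' so as written the substantive direction remains unproved.
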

Based on Theorem 3.4, we get the following characterization of law invariant coherent risk measures immediately.
\begin{theorem}
\label{SSD_fc}
A law invariant coherent risk measure $\rho$ on the (not necessarily atomless) probability space is SSD preserving if and only if $\rho$ is functionally coherent.
\end{theorem}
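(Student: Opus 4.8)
The plan is to obtain the final statement as an immediate consequence of the two equivalences already recorded above: Leitner's Theorem~\ref{leitner}, which asserts that on an arbitrary probability space a coherent risk measure is SSD preserving if and only if it admits a Kusuoka representation \eqref{kusuoka}, and the theorem of \cite{NG} quoted just above, whose first bullet asserts that a mapping is functionally coherent if and only if it admits a Kusuoka representation of exactly that form. Since both properties are characterized through the \emph{same} intermediate object, transitivity of logical equivalence closes the loop.

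Concretely, I would argue in two directions. For the forward implication, assume $\rho$ is a law invariant coherent risk measure that preserves SSD; Theorem~\ref{leitner} then produces a family $\mathcal{M}$ of probability measures on $[0,1]$ with $\rho(X) = \sup_{\mu \in \mathcal{M}} \int_{[0,1]} \mathrm{AVaR}_t(X)\, d\mu_t$ for all $X$, and the first bullet of the \cite{NG} theorem turns this representation into the statement that $\rho$ is functionally coherent. For the converse, assume $\rho$ is functionally coherent; the \cite{NG} theorem again supplies a Kusuoka representation \eqref{kusuoka}, and Theorem~\ref{leitner} reads that representation as SSD preservation. A bookkeeping point I would address explicitly is that the law-invariance hypothesis is not an extra assumption in disguise: if $X$ and $Y$ have the same distribution then $X \succeq Y$ and $Y \succeq X$ simultaneously, so any SSD preserving $\rho$ automatically satisfies $\rho(X)=\rho(Y)$, while any mapping with a Kusuoka representation is law invariant because each $\mathrm{AVaR}_t$ depends only on the law; hence the hypothesis is consistent with both sides and is simply carried along.

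I do not anticipate a genuine obstacle. The only subtlety worth a sentence is to confirm that the admissible families $\mathcal{M}$ in Leitner's theorem and in the \cite{NG} theorem are the same class, namely probability measures on the unit interval, so that the Kusuoka representation obtained from one result can be fed verbatim into the other; once this is observed, the proof is a two-step chain of ``iff''s.
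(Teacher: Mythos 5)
Your proposal is correct and is essentially identical to the paper's own proof, which likewise chains Leitner's theorem (SSD preservation iff Kusuoka representation) with the \cite{NG} equivalence (Kusuoka representation iff functional coherence). The extra bookkeeping you add about law invariance is harmless but not needed beyond what the paper records.
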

\begin{proof}
A law invariant coherent risk measure $\rho$ is SSD preserving iff $\rho$ admits a Kusuoka representation by \cite{L}, and by \cite{NG}, $\rho$ admits the Kusuoka representation iff it is functionally coherent, thus we conclude the proof.
\end{proof}
\begin{corollary} \label{uni_fc} Any law invariant coherent risk measure $\rho(X)$ on the discrete uniform probability space is necessarily functionally coherent.
\end{corollary}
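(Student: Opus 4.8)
The plan is to obtain this as an immediate consequence of the two results just established. First I would invoke Theorem~\ref{main_thm1}, which guarantees that on the uniform discrete probability space every law invariant coherent risk measure $\rho$ is SSD preserving. Then I would apply Theorem~\ref{SSD_fc}, which is stated for an arbitrary (not necessarily atomless) probability space and therefore applies to the uniform discrete one: a law invariant coherent risk measure is SSD preserving if and only if it is functionally coherent. Chaining the two implications yields that $\rho$ is functionally coherent, which is exactly the assertion.

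I do not expect a genuine obstacle here, since the substantive work has already been carried out in Theorems~\ref{main_thm1} and~\ref{SSD_fc}. The only things worth checking are bookkeeping: that the uniform discrete space $(\Omega,2^{\Omega},\mathbb{P})$ indeed meets the hypotheses of both theorems (it does — Theorem~\ref{main_thm1} is stated precisely for it, and Theorem~\ref{SSD_fc} permits any probability space), and that only the direction ``SSD preserving $\Rightarrow$ functionally coherent'' of Theorem~\ref{SSD_fc} is actually used, the converse being irrelevant for this corollary.

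Finally, I would point out the role this corollary plays in the sequel: being functionally coherent, $\rho$ admits a Kusuoka representation of the form \eqref{kusuoka} (via Theorem~\ref{leitner}, or equivalently the first bullet of the cited result of \cite{NG}), and this representation is the starting point from which the finite, simplified $\mathrm{AVaR}$ expansions of Theorems~\ref{main_thm2} and~\ref{main_thm3} will be derived.
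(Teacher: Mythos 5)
Your proposal is correct and matches the paper's own proof exactly: the paper likewise combines Theorem~\ref{main_thm1} (SSD preservation on the uniform discrete space) with Theorem~\ref{SSD_fc} (SSD preserving $\Leftrightarrow$ functionally coherent) to conclude. No issues.
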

\begin{proof}
By Theorem 3.1, on the discrete uniform probability space, a coherent risk measure $\rho(X)$ is SSD preserving. Hence the result follows from Theorem 3.5. 
\end{proof}
We proceed with the lemma below.
\begin{lemma}
On the discrete probability space with uniform distribution, the coherent risk measure $\mathrm{AVaR}_{\alpha}(X)$ is permutation invariant.
\begin{proof}
Let $W = \{w_1,,w_2,...,w_n\}$ be the vector of nonnegative weights with\\ $\sum_{i=1}^n w_i = 1$, and $X = \{x_1, x_2,...,x_n\}$ be the vector form  of r.v. $X$. Then by $\eqref{avar_rep}$ that 
\begin{align}
\label{avar_rep2}
\mathrm{AVaR}_{\alpha}(X) &= \max_{\{0 \leq w \leq 1 / n(1-\alpha) \}} \langle X,\mu \rangle \nonumber \\
           &= w_1 X_1 + w_2 X_2 +...+ w_n X_n
\end{align}
Hence, we see that when we interchange $x_i$ with $x_j$ where $i \neq j$, by interchanging the $w_i$ and $w_j$ we get the same result. This implies that $\mathrm{AVaR}_{\alpha}(X)$ is permutation invariant. 
\end{proof}
\end{lemma}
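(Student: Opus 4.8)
The plan is to read off permutation invariance from the dual (envelope) representation \eqref{avar_rep} of $\mathrm{AVaR}_\alpha$, specialized to the uniform discrete space. There a random variable is a vector $X=(x_1,\dots,x_n)\in\mathbb{R}^n$, and the dual set $\mathcal{C}$ of densities, after writing $w_i=h_i/n$ for the coefficients appearing in the pairing $\langle h,X\rangle=\frac1n\sum_i h_i x_i=\sum_i w_i x_i$, becomes the polytope
\[
\mathcal{P}_\alpha \;=\; \Bigl\{\, w\in\mathbb{R}^n:\ 0\le w_i\le \tfrac{1}{n(1-\alpha)}\ \text{for all } i,\ \sum_{i=1}^n w_i=1 \,\Bigr\}.
\]
The crucial structural observation is that $\mathcal{P}_\alpha$ is \emph{symmetric under $S_n$}: both the coordinatewise bound $0\le w_i\le \tfrac1{n(1-\alpha)}$ and the normalization $\sum_i w_i=1$ are invariant under any permutation of coordinates, precisely because the reference measure $\mathbb{P}$ assigns the same mass $1/n$ to every atom.

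Given $\pi\in S_n$, I would then use the one-line reindexing
\[
\langle w, X_\pi\rangle \;=\; \sum_{i=1}^n w_i\, x_{\pi(i)} \;=\; \sum_{j=1}^n w_{\pi^{-1}(j)}\, x_j \;=\; \langle w_{\pi^{-1}}, X\rangle,
\]
where $w_{\pi^{-1}}$ denotes $w$ with its entries permuted by $\pi^{-1}$. Because $w\mapsto w_{\pi^{-1}}$ maps $\mathcal{P}_\alpha$ bijectively onto itself, taking $\sup_{w\in\mathcal{P}_\alpha}$ on both sides yields $\mathrm{AVaR}_\alpha(X_\pi)=\mathrm{AVaR}_\alpha(X)$, which is the claim. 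This is exactly the computation recorded in the text: swapping $x_i$ with $x_j$ and compensating by swapping $w_i$ with $w_j$ leaves the maximized value unchanged.

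The argument carries essentially no obstacle beyond bookkeeping. The only points needing a little care are (i) the normalization constant in the density constraint — whether one states $0\le h\le \tfrac1{1-\alpha}$ for the density with respect to $\mathbb{P}$, or $0\le w_i\le \tfrac1{n(1-\alpha)}$ for the pairing coefficients $w_i=h_i/n$ — and (ii) the fact that the supremum in \eqref{avar_rep} is attained for $1\le p<\infty$, so that one may legitimately write $\max$ in place of $\sup$ over the compact polytope $\mathcal{P}_\alpha$. As an even shorter alternative I could invoke law invariance of $\mathrm{AVaR}_\alpha$ directly: in the uniform discrete space $X_\pi$ and $X$ have the same distribution (the same multiset of values, each of mass $1/n$), so $\mathrm{AVaR}_\alpha(X_\pi)=\mathrm{AVaR}_\alpha(X)$ is immediate. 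I would nonetheless keep the dual-representation argument as the primary one, since the symmetric weight polytope it exposes is the very object used in Theorems \ref{main_thm2} and \ref{main_thm3}.
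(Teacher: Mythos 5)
Your proposal is correct and follows essentially the same route as the paper: both read off permutation invariance from the dual representation \eqref{avar_rep}, observing that the weight polytope is symmetric under $S_n$ (because all atoms have mass $1/n$) and that permuting $X$ can be undone by permuting the maximizing weights. Your write-up is simply a more careful version of the paper's argument, with the reindexing identity and the attainment of the supremum made explicit.
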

Next we will state the result of \cite{AFS} and \cite{HKP} for natural risk statistics.
\begin{theorem} Let $D := \{ X \in \mathbb{R}^n | x_1 \leq x_2 \leq ... \leq x_n \}$ and $\mathbb{P} := \{ X \in \mathbb{R}^n | \sum_{i=1}^n x_i = 1, x_i\geq 0, i=1,...,n \}$ and denote by $X_{os}$ the order statistics of $X$, i.e. $X_{os} := (x_{[1]},x_{[2]},...,x_{[n]})$ for some $\pi \in S_n$ such that $X_{\pi} \in D$.
Suppose the natural risk statistic $\rho$ is subadditive. Then there exists a closed convex set of weights $W \subset \mathbb{P} \cap \mathbb{D}$ such that 
\begin{equation}
\label{nat}
\rho(X) := \max_{W \in \mathcal{W}} \langle W,X_{os} \rangle, \qquad \forall X \in \mathbb{R}^n.
\end{equation}
\end{theorem}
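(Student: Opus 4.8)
The plan is to realize $\rho$ as the support function of a single compact convex set and then translate each axiom of a subadditive natural risk statistic into a geometric constraint on that set.

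First I would note that positive homogeneity together with subadditivity makes $\rho$ a finite sublinear functional on $\mathbb{R}^n$, hence convex and, being finite on all of $\mathbb{R}^n$, continuous. By the standard support-function representation of sublinear functionals (equivalently Fenchel--Moreau), the subdifferential at the origin
\[
\mathcal{A} := \partial\rho(0) = \bigl\{ a \in \mathbb{R}^n : \langle a, X\rangle \le \rho(X)\ \text{ for all } X \in \mathbb{R}^n \bigr\}
\]
is nonempty, compact and convex, and $\rho(X) = \max_{a \in \mathcal{A}} \langle a, X \rangle$ for every $X$.

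Next I would peel off three structural facts. Monotonicity forces $\rho(X)\le\rho(0)=0$ whenever $X\le 0$; applying this to $X=-te_j$ with $t>0$ gives $a_j\ge 0$ for all $a\in\mathcal{A}$, so $\mathcal{A}\subset\mathbb{R}^n_+$. Translation invariance $\rho(X+c\mathbf{1})=\rho(X)+c$ (with $\mathbf{1}$ the all-ones vector), evaluated at $X=0$ for $c=1$ and $c=-1$, yields $\max_{a\in\mathcal{A}}\sum_i a_i=1=\min_{a\in\mathcal{A}}\sum_i a_i$, hence $\sum_i a_i=1$ on $\mathcal{A}$; together with the previous point this gives $\mathcal{A}\subset\mathbb{P}$. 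Finally, permutation invariance $\rho(X_\pi)=\rho(X)$ for all $\pi\in S_n$ means the support functions of $\mathcal{A}$ and of its permuted image coincide, so $\mathcal{A}$ equals that image for every $\pi$, i.e. $\mathcal{A}$ is permutation-symmetric.

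The last step passes from $\mathcal{A}$ to sorted weights. For fixed $X$ the rearrangement inequality gives $\max_{\sigma\in S_n}\langle a_\sigma,X\rangle=\langle a_{os},X_{os}\rangle$, where $a_{os}$ is $a$ reordered nondecreasingly. Since $\mathcal{A}$ is permutation-symmetric, the double maximum over $a\in\mathcal{A}$ and $\sigma\in S_n$ stays inside $\mathcal{A}$, so
\[
\rho(X)=\max_{a\in\mathcal{A}}\langle a,X\rangle=\max_{a\in\mathcal{A}}\max_{\sigma\in S_n}\langle a_\sigma,X\rangle=\max_{a\in\mathcal{A}}\langle a_{os},X_{os}\rangle .
\]
Setting $\mathcal{W}:=\mathcal{A}\cap\mathbb{D}$, which coincides with $\{a_{os}:a\in\mathcal{A}\}$ (permutation symmetry gives $a_{os}\in\mathcal{A}$, and any element of $\mathcal{A}\cap\mathbb{D}$ is its own reordering), we obtain $\rho(X)=\max_{W\in\mathcal{W}}\langle W,X_{os}\rangle$ for all $X\in\mathbb{R}^n$. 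The set $\mathcal{W}$ is closed and convex as the intersection of the compact convex $\mathcal{A}$ with the polyhedral cone $\mathbb{D}$, is contained in $\mathbb{P}\cap\mathbb{D}$ because $\mathcal{A}\subset\mathbb{P}$, and the maximum is attained by compactness. The hard part will be this final step: carefully justifying that the rearrangement maximum may be absorbed into the maximum over $\mathcal{A}$ exactly because of permutation symmetry, and verifying that $\mathcal{A}\cap\mathbb{D}$ is the right weight set (closed, convex, inside $\mathbb{P}\cap\mathbb{D}$, and reproducing $\rho$ on all of $\mathbb{R}^n$, not only on $\mathbb{D}$); the extraction of $\mathcal{A}\subset\mathbb{R}^n_+$ and $\sum_i a_i=1$ and the support-function representation itself are routine.
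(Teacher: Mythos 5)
Your argument is correct and complete. One thing worth knowing: the paper offers no proof of this statement at all --- it is quoted verbatim as a result borrowed from \cite{AFS} and \cite{HKP} --- so there is no internal argument to compare against; what you have written is essentially the standard convex-duality proof that the cited note of Ahmed, Filipovi\'c and Svindland gives. All the delicate points are handled properly: the support-function representation $\rho(X)=\max_{a\in\partial\rho(0)}\langle a,X\rangle$ with $\partial\rho(0)$ nonempty, compact and convex (valid because $\rho$ is a finite sublinear functional on $\mathbb{R}^n$); the extraction of $a\ge 0$ from monotonicity and of $\sum_i a_i=1$ from translation invariance applied at $\pm\mathbf{1}$; the deduction that $\partial\rho(0)$ is permutation-symmetric because two compact convex sets with the same support function coincide; and the key absorption step $\max_{a}\max_{\sigma}\langle a_\sigma,X\rangle=\max_a\langle a,X\rangle$, which is exactly where permutation symmetry is used, combined with the rearrangement inequality to produce $\langle a_{os},X_{os}\rangle$. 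The identification $\mathcal{W}=\partial\rho(0)\cap\mathbb{D}=\{a_{os}:a\in\partial\rho(0)\}$ is also argued correctly in both directions, which is what makes $\mathcal{W}$ closed and convex without further work. Two minor remarks: under this paper's definition a natural risk statistic is already a coherent risk measure, so positive homogeneity plus convexity already yield subadditivity and that hypothesis is redundant here (it matters in \cite{HKP}, where the axioms are weaker); and finiteness of $\rho$ on all of $\mathbb{R}^n$, which you assume when invoking continuity, also follows directly from monotonicity and translation invariance via $\rho(X)\le\rho\bigl((\max_i x_i)\mathbf{1}\bigr)=\max_i x_i$, should anyone object to taking it from the definition.
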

We are now ready to prove Theorem 3.2 and Theorem 3.3.
\begin{proof}
[Proof of Theorem 3.2] Using the definition \eqref{rep1}, by a straightforward calculation we get that 
\begin{equation}
\label{avar_uni}
\mathrm{AVaR}_{i/n} (X) = \frac{1}{n-i}[X_{[i+1]} + ... +X_{[n]}]
\end{equation}
Moreover, for $\frac{i-1}{n} \leq \alpha \leq \frac{i}{n}$, let $\lambda = \frac{n(1-\alpha)-(n-i+1)(i-n\alpha)}{n(1-\alpha)}$. Then, it is easy to verify that $0\leq \lambda \leq 1$ with
\begin{equation}
\mathrm{AVaR}_{\alpha}(X) = \lambda \mathrm{AVaR}_{\frac{i}{n}}(X) + (1-\lambda)\mathrm{AVaR}_{\frac{i-1}{n}}(X)
\end{equation}
Then, by Theorem 3.1, we have, for the comonotone coherent risk measure $\rho$
\begin{align}
\rho(X) &= \int_0^1 \mathrm{AVaR}_p(X)d\mu(p)
\nonumber\\
&= \sum_{i=1}^n \int_{\frac{i-1}{n}}^{\frac{i}{n}} \mathrm{AVaR}_p(X)d\mu(p)
\nonumber\\
&= \sum_{i=1}^n \int_{\frac{i-1}{n}}^{\frac{i}{n}} \left(p\mathrm{AVaR}_{\frac{i-1}{n}}(X) + (1-p)\mathrm{AVaR}_{\frac{i}{n}}(X)\right)d\mu(p)
\nonumber\\
&= \sum_{i=1}^n \mathrm{AVaR}_{\frac{i-1}{n}}(X)\int_{\frac{i-1}{n}}^{\frac{i}{n}} p ~d\mu(p) + \mathrm{AVaR}_{\frac{i}{n}}(X)\int_{\frac{i-1}{n}}^{\frac{i}{n}} (1-p) ~d\mu(p)
\end{align}
We note here that the positive coefficients of $\mathrm{AVaR}_{\frac{i}{n}}$ for $0 \leq i \leq n$ add up to $\int_0^1 d\mu(p) = 1$, since $\mu$ is a probability measure on [0,1]. We denote those coefficients as $\mu_i$. We also know that the last coefficient  $\mu_n = \int_{\frac{n-1}{n}}^{1} (1-p)d\mu(p) = 0$, since $\rho$ is a comonotone coherent risk measure. 
Moreover, since we are in the discrete uniform probability space, by Lemma 3.8. $\mathrm{AVaR}_\alpha(X)$ is permutation invariant, hence a natural risk statistic. It is also subadditive by being a coherent risk measure. Thus, the representation \eqref{avar_uni} necessarily satisfies \eqref{avar_rep2} where 
\begin{align}
0 &\leq w_j \leq \min\left\{\frac{1}{n-i},1\right \} \text{, for all } 1\leq i \leq n 
\nonumber\\
0 &\leq w_1 \leq w_2 \leq ... \leq w_n \leq 1  
\nonumber\\
&\sum_{j=1}^n w_j = 1,
\end{align}
which implies $w_j = \frac{1}{n-i}$ for $i+1 \leq j \leq n$ and $w_j = 0$ for $1 \leq j \leq i$.
\end{proof}
\begin{proof}
[Proof of Theorem 3.3]  Modulo the same arguments as in the proof of Theorem 3.2,  we have that any law invariant coherent risk measure is of the form 
\begin{equation}
\label{son}
\rho(X) = \sup_{\mu \in \mathcal{M}}\sum_{i=0}^{i=n} \mu_{i} \mathrm{AVaR}_{i/n}(X).
\end{equation}
But, since $\mu = \{\mu_0,\mu_1,\mu_2,...,\mu_n\}$ is an element in the unit simplex in $\mathbb{R}^{n+1}$, by Heine-Borel Theorem, representation \eqref{son} attains its maximum for a specific $\mu$. We also know by \cite{P} that $\mathrm{AVaR}_{i/n}(X)$ is comonotone additive for $0 \leq i \leq n-1 $. Hence any combination with $\mu_n = 0$ is necessarily comonotone, and the weight $\mu_n$ is strictly positive. The theorem is proven.
\end{proof}
\begin{remark}
Note that, for a finite uniform probability space with equal probabilities, two random variables $X$ and $Y$ are distributionally equivalent if and only if there is a permutation transferring one into another. Hence, any law invariant coherent risk measure $\rho$ is necessarily permutation invariant, and is a natural risk statistic in the discrete finite uniform probability space due to \eqref{coh_rep}, as well. Thus, we note that the formulations of the coherent risk measure in Theorem 3.2 and Theorem 3.3 satisfy the representation \eqref{nat} in Theorem 3.8. This also reveals that being more \textit{risk averse}, i.e. adding more weight to $X_{[n]}$ causes that the coherent risk measure $\rho$ loses its comonontonicity property.
\end{remark}

\end{document}